\newtheorem{theorem}{Theorem}%[section]
\newtheorem{lemma}{Lemma}
\theoremstyle{definition}
\theoremstyle{remark}
\newcommand{\trace}{\mathop{\mathrm{tr}}}%\trace
\newcommand{\rank}{\mathop{\mathrm{rank}}}%\trace
\newcommand{\diag}{\mathop{\mathrm{diag}}}%\trace
\begin{document}
% %
% % paper title
% % can use linebreaks \\ within to get better formatting as desired
\title{{Energy-Throughput Tradeoff in Sustainable Cloud-RAN with Energy Harvesting}}
\author{{Zhao Chen, Ziru Chen, Lin X. Cai, and Yu Cheng}\\
Department of Electrical and Computer Engineering, Illinois Institute of Technology, Chicago, USA\\
Emails: zchen84@iit.edu, zchen71@hawk.iit.edu, \{lincai, cheng\}@iit.edu}

% % make the title area
\maketitle
\thispagestyle{empty}

% \linespread{1.6}

% \begin{abstract}% \end{abstract}

\begin{abstract}
In this paper, we investigate joint beamforming for energy-throughput tradeoff in a sustainable cloud radio access network system, where multiple base stations (BSs) powered by independent renewable energy sources will collaboratively transmit wireless information and energy to the data receiver and the energy receiver simultaneously.
In order to obtain the optimal joint beamforming design over a finite time horizon, we formulate an optimization problem to maximize the throughput of the data receiver while guaranteeing sufficient RF charged energy of the energy receiver.
Although such problem is non-convex, it can be relaxed into a convex form and upper bounded by the optimal value of the relaxed problem. We further prove tightness of the upper bound by showing the optimal solution to the relaxed problem is rank one.
Motivated by the optimal solution, an efficient online algorithm is also proposed for practical implementation. Finally, extensive simulations are performed to verify the superiority of the proposed joint beamforming strategy to other beamforming designs.
\end{abstract}

 \begin{IEEEkeywords}
 Energy harvesting, wireless information and power transfer, beamforming, Cloud-RAN.
 \end{IEEEkeywords}

\section{Introduction}

As an emerging network architecture that incorporates cloud computing into wireless mobile networks, cloud radio access network (Cloud-RAN) has been proposed and will play a key role in the fifth generation (5G) communication system~\cite{chih2014toward}.
As shown in Fig. \ref{fig.arch}, all the base stations (BSs) in a Cloud-RAN system will be connected to a central processor (CP) for baseband processing via backhauls.
In order to mitigate the inter-cell interference and improve the overall throughput of the network, each user will receive distributed beamforming signals from a cluster of BSs in the downlink, while the wireless signals of each user will be received by a cluster of BSs in the uplink and then collaboratively processed at the CP. Note that both downlink and uplink beamforming designs have been widely addressed in~\cite{dai2014sparse,luo2015downlink,shi2014group}.

Energy harvesting (EH) from renewable energy sources\cite{piro2013hetnets} such as solar and wind powers, provides a green alternative for traditional on-grid power supplies in wireless communication systems.
Due to the intermittent nature of renewable sources, how to maximize system throughput by optimizing the consumption of randomly arriving energies has been extensively studied in the literature~\cite{ulukus2015energy}.
Particularly, radio-frequency (RF) energy of wireless signals can also be exploited to charge low-power devices remotely, which is referred to as RF charging.
For instance, sufficient energy can be collected by ambient signals from TV towers~\cite{vyas2013wehp} or Wi-Fi networks~\cite{ermeey2016indoor}, and dedicated energy signals have been considered to charge devices in the wireless powered communication network (WPCN)\cite{ju2014throughput}.
Moreover, since wireless signals can carry information and energy at the same time, it is possible to perform simultaneous wireless information and energy transfer (SWIPT)\cite{lu2015wireless} for either co-located information and energy receivers or separated located receivers.
Specifically, for multi-antenna systems, joint beamforming was explored to balance the RF charged energy and transmission rate between different receivers~\cite{xiang2012robust,xu2014multiuser,luo2015capacity}.
However, previous works did not consider the combination of renewable energy harvesting and RF charging to establish a fully sustainable wireless communication system, especially for the Cloud-RAN in future 5G systems.

\begin{figure}
\centering
\includegraphics[height = 6cm]{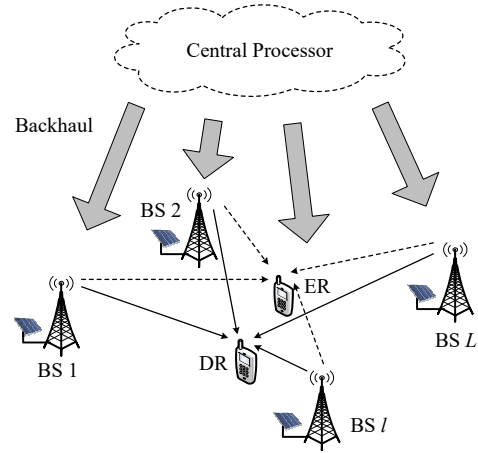}
\caption{A sustainable Cloud-RAN system, where renewable energy harvesting BSs transmit information and energy simultaneously to the data receiver (DR) and the energy receiver (ER) with joint beamforming.}
\label{fig.arch}
\vspace{-1em}
\end{figure}

In this paper, we consider a fully sustainable Cloud-RAN system, where all BSs are powered by independent renewable sources and broadcast to mobile users in the downlink simultaneously.
Each user can be either decode data or receive energy from wireless signals.
By providing the energy receiver with sufficient RF charged energy, the throughput of the data receiver will be maximized by joint energy and data beamforming design.
An offline optimization problem is formulated to study joint beamforming over a finite time horizon, which is originally non-convex and can be then solved optimally by relaxation.
Moreover, an efficient online algorithm is also proposed for implementation in practical systems.
By numerical evaluations, regions of energy-throughput tradeoff are built and performances of joint offline and online joint beamforming designed are extensively studied.

\section{System Model and Problem Formulation}\label{sec.I}

As shown in Fig. \ref{fig.arch}, we consider a downlink Cloud-RAN communication system, which consists of $L$ BSs, one data receiver and one energy receiver.
Powered by renewable energy sources such solar power and wind power, each BS is deployed in different locations and connected with the CP via a backhaul link.
In addition, it is assumed that each BS and each receiver are both equipped with one single antenna and operate on the same frequency band.
All the BSs will form a set $\mathcal{L} = \{1,\ldots,L\}$ and serve the mobile users cooperatively with wireless information and energy transmission by joint downlink beamforming.
Without loss of generality, we assume that there are totally $N$ time slots to be optimized. For each time slot $n \in \mathcal{N} = \{0,\ldots,N\}$, the amount of renewable energy harvested by the $l$th BS is denoted by $E_{l,n}$, which is assumed to be non-causally known for the offline problem.
We define $\boldsymbol{h} \in \mathbb{C}^{L \times 1}$ and $\boldsymbol{g} \in \mathbb{C}^{L \times 1}$ to be the quasi-static flat fading complex channel vectors from BS set $\mathcal{L}$ to the data receiver and the energy receiver, respectively, which is assumed to be constant throughout the short-term transmission and known at all BSs and users.

To build a sustainable Cloud-RAN system, we consider to maximize throughput of the data receiver while charging the energy receiver with sufficient RF energy. 
Thus, joint beamforming vector $\boldsymbol{w}_{n} = \left[{w}_{1,n},\ldots,{w}_{l,n},\ldots,{w}_{{L},n} \right]^T \in \mathbb{C}^{L \times 1}$ needs to be optimized for each time slot $n \in \mathcal{N}$, where ${w}_{l,n}$ is the beamforming weight of BS $l$.
The received signals of the data and energy receivers at slot $n$ can be written as
\begin{align}
y_{D,n} & = \boldsymbol{h}^H \boldsymbol{w}_{n} s_n + z_{D,n},\label{eq.rate_dr}  \\
y_{E,n} & = \boldsymbol{g}^H \boldsymbol{w}_{n} s_n + z_{E,n},\label{eq.charge_er}
\end{align}
where $s_{n} \sim \mathcal{CN}(0,1)$ denotes the symbol to the data receiver, and $z_{D,n}$ and $z_{E,n}$ are both additive white circularly symmetric complex Gaussian (CSCG) noises with variance $\sigma^2$.

Given the random energy profile $\{E_{l,n}\}_{n=1}^N$ of each BS $l \in \mathcal{L}$ and the energy receiver's RF charged energy constraint $q$, we can maximize throughput of the data receiver by optimizing the joint beamforming vector $\boldsymbol{w}_{n}$ as follows:
\begin{align}
\max_{\{\boldsymbol{w}_{n}\}_{n\in\mathcal{N}}} & T = \sum_{n=1}^N \log ( 1 + { |\boldsymbol{h}^H \boldsymbol{w}_{n}|^2}/{\sigma^2} ) \label{eq.gen_problem_sim}
 \\
\mathrm{s.t.} \hspace{1em} & Q = \sum_{n=1}^N \eta |\boldsymbol{g}^H \boldsymbol{w}_{n}|^2 \geq q, \label{eq.RF_energy_cons_sim} \\
& \sum_{t=1}^n \left| {w}_{l,t} \right|^2 \leq \sum_{t=1}^n E_{l,t}, \hspace{1em}\forall l \in \mathcal{L},\forall n \in \mathcal{N}, \label{eq.renew_energy_cons_sim}
\end{align}
where the RF charged energy of the energy receiver is constrained by \eqref{eq.RF_energy_cons_sim}, and \eqref{eq.renew_energy_cons_sim} guarantees the consumed energy of BS $l$ up to slot $n$ does not exceed the harvested energy.
Here, $\eta \in (0,1)$ is the energy conversion efficiency.
Note that problem \eqref{eq.gen_problem_sim} is a non-convex optimization problem due to the quadratic terms of $\boldsymbol{w}_{n}$ in the objective and constraint \eqref{eq.RF_energy_cons_sim}. Therefore, it is quite challenging to find the global optimum of the problem.

Notice that there exists a tradeoff between the data receiver's throughput $T$ and the energy receiver's RF charged energy $Q$, which characterizes boundary points of the achievable energy-throughput region
\begin{align}
\mathcal{C}_{E-T} = \{(Q,T):Q \leq q, T\leq f_{T}(q), q \leq q_{\max}\}, \nonumber
\end{align}
where $f_{T}(q)$ denotes the maximum throughput $T$ when given energy constraint $q$ in problem \eqref{eq.gen_problem_sim}, and $q_{\max}$ is the maximum achievable RF charged energy of the energy receiver.
In the next section, we will present how to solve problem \eqref{eq.gen_problem_sim} and derive the achievable energy-throughput region $\mathcal{C}_{E-T}$.

\section{Joint Beamforming Design for Energy-Throughput Tradeoff}

In this section, we will show the optimization of $\boldsymbol{w}_{n}$ for each time slot $n\in \mathcal{N}$. Firstly, we will examine the value of $q_{\max}$ to guarantee feasibility of problem \eqref{eq.gen_problem_sim}. Then, the region $\mathcal{C}_{E-T}$ can be derived by solving problem \eqref{eq.gen_problem_sim} for $q \leq q_{\max}$.

\subsection{Maximum Achievable RF Charged Energy}
Ignoring the data receiver, $q_{\max}$ can be obtained by maximizing RF charged energy of the energy receiver as follows
\begin{align}
\max_{\{\boldsymbol{w}_{n}\}_{n\in\mathcal{N}}} & \hspace{1em} Q
 \label{eq.q_max_pro} \\
\mathrm{s.t.} \hspace{1em} & \hspace{0.5em}\eqref{eq.renew_energy_cons_sim}. \nonumber
\end{align}
The traditional maximum ratio combining (MRC) strategy for multi-input single-output (MISO) with sum power constraint cannot be applied, since each BS in the Cloud-RAN system is constrained by independent renewable energy sources.
% Although the original form of problem \eqref{eq.q_max_pro} is a non-convex quadratically constrained quadratic program (QCQP) that is generally difficult to solve, it can be approximated and solved efficiently by semidefinite relaxation (SDR)~\cite{luo2010semidefinite}.
Nevertheless, to obtain a closed-form solution, we can rewrite the total RF charged energy of the energy receiver
\begin{align}
Q  = \eta \boldsymbol{g}^H \textbf{W} \boldsymbol{g} = \eta \sum_{i=1}^L \sum_{j=1}^L g_i^H g_j \textbf{W}_{ij} = \eta\trace{(\textbf{W}\textbf{G})}, \label{eq.q_entry}
\end{align}
where we define $\textbf{W} = \sum_{n=1}^N \boldsymbol{W}_n = \sum_{n=1}^N\boldsymbol{w}_{n}\boldsymbol{w}_{n}^H$. It can be easily seen that each entry of $\textbf{W}$ can be written by $\textbf{W}_{ij} = \sum_{n=1}^N {w}_{i,n} {w}_{j,n}^H$ for each $i,j \in \mathcal{L}$.

\begin{lemma}\label{lm.opt_energy_not_capacity}
The optimal solution to problem \eqref{eq.q_max_pro} can be denoted by $\boldsymbol{w}_{n}^*=\sqrt{P_n}\boldsymbol{w}_{0}^*e^{j\theta_n}$, as long as it satisfies the energy constraints $\sum_{t=1}^n P_t |{w}_{l,0}|^2 \leq \sum_{t=1}^n E_{l,t}$ for each BS $l \in \mathcal{L}$ at each time slot $n \in \mathcal{N}$.
Here, the optimal unit beamforming vector $\boldsymbol{w}_{0}^*$ is defined by
\begin{align}
\begin{small}
\boldsymbol{w}^*_{0} = \left(\frac{g_{1}}{|g_{1}|}\sqrt\frac{{\sum_{n=1}^N E_{1,n}}}{\sum_{l=1}^L\sum_{n=1}^N E_{l,n}},\ldots,\frac{g_{L}}{|g_{L}|}\sqrt\frac{{\sum_{n=1}^N E_{L,n}}}{\sum_{l=1}^L\sum_{n=1}^N E_{l,n}}\right)^T \nonumber
\end{small}
\end{align}
and $\theta_n \in [0,2\pi)$ is an arbitrary constant phase. Accordingly, the maximum achievable RF charged energy can be obtained by $q_{\max} = \eta\left(\sum_{l=1}^L {|g_{l}|} \sqrt{\sum_{n=1}^N E_{l,n}}\right)^2$.
\end{lemma}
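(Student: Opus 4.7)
The plan is to derive matching upper and achievable lower bounds on $Q$ in \eqref{eq.q_max_pro}. First I would pass to the PSD covariance representation already isolated in \eqref{eq.q_entry}, writing $Q/\eta = \boldsymbol{g}^H \boldsymbol{W} \boldsymbol{g}$ with $\boldsymbol{W} = \sum_n \boldsymbol{w}_n \boldsymbol{w}_n^H \succeq 0$. The per-BS cumulative constraint \eqref{eq.renew_energy_cons_sim}, specialized to $n=N$, bounds each diagonal entry: $\boldsymbol{W}_{ll} \leq \sum_{n=1}^N E_{l,n}$ for every $l \in \mathcal{L}$.

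Second, because every $2\times 2$ principal submatrix of $\boldsymbol{W}$ is PSD, one has $|\boldsymbol{W}_{ij}| \leq \sqrt{\boldsymbol{W}_{ii}\boldsymbol{W}_{jj}}$. Applying the triangle inequality entrywise in \eqref{eq.q_entry} and then this PSD bound gives
\begin{align}
Q/\eta \leq \sum_{i,j} |g_i|\sqrt{\boldsymbol{W}_{ii}\boldsymbol{W}_{jj}}\,|g_j| = \Big(\sum_{l=1}^L |g_l|\sqrt{\boldsymbol{W}_{ll}}\Big)^2 \leq \Big(\sum_{l=1}^L |g_l|\sqrt{\sum_{n=1}^N E_{l,n}}\Big)^2, \nonumber
\end{align}
which is exactly the claimed $q_{\max}/\eta$.

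For achievability I would verify that the proposed $\boldsymbol{w}_n^* = \sqrt{P_n}\,\boldsymbol{w}_0^*\,e^{j\theta_n}$ saturates all three inequalities above. The definition of $\boldsymbol{w}_0^*$ aligns each entry's phase with $g_l$, so the scalars $g_l^* w_{l,n}^*$ are co-phased and the triangle inequality is tight; its amplitudes are proportional to $\sqrt{\sum_n E_{l,n}}$, so the aggregate $\boldsymbol{W}^* = \big(\sum_n P_n\big)\boldsymbol{w}_0^*(\boldsymbol{w}_0^*)^H$ is rank one, making the PSD bound tight; and choosing $\sum_n P_n = \sum_{l,n} E_{l,n}$ saturates the diagonal constraint. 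Direct substitution then yields $Q = \eta\big(\sum_l |g_l|\sqrt{\sum_n E_{l,n}}\big)^2$, matching the upper bound, and the arbitrary phases $\theta_n$ drop out because only $|\boldsymbol{g}^H\boldsymbol{w}_n^*|^2$ enters $Q$.

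The subtlety I expect to be the main obstacle is showing that a feasible power schedule $\{P_n\}$ actually exists under the stricter cumulative-in-$n$ form of \eqref{eq.renew_energy_cons_sim}, not merely its $n=N$ aggregate. The lemma isolates this point by carrying the explicit hypothesis $\sum_{t=1}^n P_t|{w}_{l,0}^*|^2 \leq \sum_{t=1}^n E_{l,t}$; to see the hypothesis is nonvacuous one may exhibit e.g.\ the tight schedule $\sum_{t=1}^n P_t = \big(\sum_{l,n} E_{l,n}\big)\min_{l}\big(\sum_{t=1}^n E_{l,t}/\sum_{m=1}^N E_{l,m}\big)$, which is nondecreasing in $n$, starts at $0$, and attains $\sum_{l,n} E_{l,n}$ at $n=N$. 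Everything else reduces to routine substitution.
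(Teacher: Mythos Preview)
Your proposal is correct and follows essentially the same route as the paper: both bound $Q/\eta$ via $|\textbf{W}_{ij}|\le\sqrt{\textbf{W}_{ii}\textbf{W}_{jj}}$ (the paper derives this by an explicit AM--GM/Cauchy--Schwarz computation on the time-slot sum and reads off the proportional-vector structure from the equality case, whereas you invoke the $2\times 2$ PSD minors of $\textbf{W}$ directly), then apply the triangle inequality and the diagonal constraint $\textbf{W}_{ll}\le\sum_{n} E_{l,n}$. Your explicit feasible schedule for $\{P_n\}$ is a useful addition that the paper leaves as a standing hypothesis.
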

\begin{proof}
Please refer to Appendix \ref{app.opt_energy_not_capacity}.
\end{proof}

Note that the optimal solution to problem \eqref{eq.q_max_pro} is not unique. For each time slot $n$, the optimal power consumption ratio of each BS $l\in\mathcal{L}$ is equal to its total energy harvesting ratio, i.e.
$\|{w}_{0,l}\|^2 = \frac{|w_{l,n}|^2}{\|\boldsymbol{w}_{n}\|^2} = \frac{\sum_{n=1}^N E_{l,n}}{\sum_{l=1}^L\sum_{n=1}^N E_{l,n}}$, which is referred to as \emph{optimal power ratio} of each BS $l$ for energy maximization.

% \begin{corollary}\label{cor.1}
% The maximum capacity when $q = q_{\max}$ can be given by
% \end{corollary}

\subsection{Achievable Energy-Throughput Region}

Obtaining $q_{\max}$, we can tackle problem \eqref{eq.gen_problem_sim} with any given $q \in [0,q_{\max}]$.
Following the same manipulation of $\{\boldsymbol{w}_{n}\}_{n\in\mathcal{N}}$ in \eqref{eq.q_entry}, problem \eqref{eq.gen_problem_sim} can be reformulated in terms of transmit covariance matrices $\{\boldsymbol{W}_{n}\}_{n\in\mathcal{N}}$ as follows,
\begin{align}
\hspace{-1em}\max_{\{\boldsymbol{W}_{n} \succeq 0\}_{n\in\mathcal{N}}} & \sum_{n=1}^N \log \left( 1 + { \boldsymbol{h}^H \boldsymbol{W}_{n}\boldsymbol{h}}/{\sigma^2} \right) \
 \label{eq.sim_pro_reform} \\
\mathrm{s.t.} \hspace{1.5em}
& \hspace{-0.5em}\trace{\left(\sum_{n=1}^N \boldsymbol{W}_n\textbf{G}\right)} \geq q/\eta, \label{eq.sim_pro_reform_q}\\
& \hspace{-0.5em} \trace{\left(\sum_{t=1}^n \boldsymbol{W}_{t} \boldsymbol{A}_{l}\right)} \leq \sum_{t=1}^n E_{l,t}, \hspace{0em} \forall l \in \mathcal{L},\forall n \in \mathcal{N}, \hspace{-1em}\label{eq.r_simp_linear_cons}
% & \left[\sum_{t=1}^n \boldsymbol{W}_{t} \right]_{ll} \leq \sum_{t=1}^n E_{l,t}, \hspace{1em}\forall l \in \mathcal{L},\forall n \in \mathcal{N},   \\
% & \boldsymbol{W}_n \succeq 0, \hspace{0.3em} \forall n \in \mathcal{N}, \nonumber
\end{align}
where the constraints in \eqref{eq.r_simp_linear_cons} follow from \eqref{eq.renew_energy_cons_sim} and the auxiliary matrices are defined as $\boldsymbol{A}_{l} = \diag{(\underbrace{0,\ldots,0}_{l-1},1,\underbrace{0,\ldots,0}_{L-l})}$.

Problem \eqref{eq.sim_pro_reform} is a convex optimization problem, which can be efficiently solved by the standard interior point algorithm~\cite{boyd2004convex}.
However, it still needs to be proved that each optimal $\boldsymbol{W}^*_{n}$ is rank one and then can be decomposed into the optimal vector $\boldsymbol{w}^*_{n}$ in problem \eqref{eq.gen_problem_sim}.
Otherwise problem \eqref{eq.sim_pro_reform} only gives the upper bound of problem \eqref{eq.gen_problem_sim}.
Later, we will show that the optimal matrix $\boldsymbol{W}^*_{n}$ satisfies $\rank{\left(\boldsymbol{W}^*_{n}\right)} \leq 1$ for each $n\in\mathcal{N}$.

For properties of the optimal solution to problem \eqref{eq.sim_pro_reform}, we study the following rate maximization problem with given transmission power $p_{l,n}$ for each BS $l$ and RF charging constraint $q_n$ for the energy receiver at slot $n$:
\begin{align}
\max_{\boldsymbol{W}_{n} \succeq 0}  \hspace{1em}  & c(\boldsymbol{W}_{n}) \label{eq.sub_problem} \\
\mathrm{s.t.} \hspace{1em}
& \trace{(\boldsymbol{W}_n\textbf{G})} \geq q_n , \nonumber\\
& \trace{\left( \boldsymbol{W}_{n} \boldsymbol{A}_{l}\right)} \leq p_{l,n}, \hspace{1em} \forall l \in \mathcal{L}, \nonumber
\end{align}
where we define $c(\boldsymbol{W}_{n}) = \log ( 1 + { \boldsymbol{h}^H \boldsymbol{W}_{n}\boldsymbol{h}}/{\sigma^2} )$. Letting $\boldsymbol{p}_n = \{p_{1,n},p_{2,n},\ldots,p_{L,n}\}$, the optimal value of problem \eqref{eq.sub_problem} can be denoted as function $f_R(\boldsymbol{p}_n,q_n)$, which is a concave function as proved in the following lemma.

\begin{lemma}\label{lm.concavity_r}
For each single time slot $n\in \mathcal{N}$, the maximum rate function $r_n = f_R(\boldsymbol{p}_n,q_n)$ is a concave function of $\left(\boldsymbol{p}_n,q_n\right)$.
\end{lemma}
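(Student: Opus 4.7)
The plan is to establish concavity of the value function by the standard perturbation argument: convex-combine the parameters and the corresponding optimizers, exploit linearity of the constraints in $\boldsymbol{W}_n$ to preserve feasibility, and then invoke concavity of the objective.

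First, I would observe that the objective $c(\boldsymbol{W}_n) = \log\bigl(1 + \boldsymbol{h}^H\boldsymbol{W}_n\boldsymbol{h}/\sigma^2\bigr)$ is a concave function of $\boldsymbol{W}_n$, since $\boldsymbol{h}^H\boldsymbol{W}_n\boldsymbol{h}$ is linear in $\boldsymbol{W}_n$ and $\log(1+x)$ is concave and nondecreasing on $x \ge 0$. Second, all constraints defining the feasible set in \eqref{eq.sub_problem} are affine in $\boldsymbol{W}_n$: the RF charging requirement $\trace(\boldsymbol{W}_n\textbf{G}) \ge q_n$, the per-BS power caps $\trace(\boldsymbol{W}_n\boldsymbol{A}_l) \le p_{l,n}$, and the conic constraint $\boldsymbol{W}_n \succeq 0$. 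In particular, each constraint depends linearly on the joint variable $(\boldsymbol{W}_n, \boldsymbol{p}_n, q_n)$.

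Next, pick any two parameter tuples $(\boldsymbol{p}_n^{(1)}, q_n^{(1)})$ and $(\boldsymbol{p}_n^{(2)}, q_n^{(2)})$ in the domain of $f_R$ and let $\boldsymbol{W}_n^{(1)}, \boldsymbol{W}_n^{(2)}$ be corresponding optimizers, so that $c(\boldsymbol{W}_n^{(i)}) = f_R(\boldsymbol{p}_n^{(i)}, q_n^{(i)})$. For any $\lambda \in [0,1]$, set $\boldsymbol{W}_n^{(\lambda)} = \lambda \boldsymbol{W}_n^{(1)} + (1-\lambda)\boldsymbol{W}_n^{(2)}$ and the convex-combined parameters $(\boldsymbol{p}_n^{(\lambda)}, q_n^{(\lambda)})$. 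Positive semidefiniteness is preserved under convex combinations, and because every other constraint is affine in $(\boldsymbol{W}_n, \boldsymbol{p}_n, q_n)$, $\boldsymbol{W}_n^{(\lambda)}$ is feasible for the problem parameterized by $(\boldsymbol{p}_n^{(\lambda)}, q_n^{(\lambda)})$.

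Finally, by optimality of $f_R$ at $(\boldsymbol{p}_n^{(\lambda)}, q_n^{(\lambda)})$ and concavity of $c(\cdot)$,
\begin{align}
f_R(\boldsymbol{p}_n^{(\lambda)}, q_n^{(\lambda)})
&\ge c\bigl(\boldsymbol{W}_n^{(\lambda)}\bigr) \nonumber \\
&\ge \lambda c\bigl(\boldsymbol{W}_n^{(1)}\bigr) + (1-\lambda) c\bigl(\boldsymbol{W}_n^{(2)}\bigr) \nonumber \\
&= \lambda f_R(\boldsymbol{p}_n^{(1)}, q_n^{(1)}) + (1-\lambda) f_R(\boldsymbol{p}_n^{(2)}, q_n^{(2)}), \nonumber
\end{align}
which is the desired concavity. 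I do not anticipate a real obstacle here; the only subtlety is checking that the feasible set depends affinely on the parameters so that the convex combination trick goes through, and this is immediate from the form of \eqref{eq.sub_problem}. This concavity is exactly what is needed later to justify treating $f_R(\boldsymbol{p}_n, q_n)$ as a well-behaved per-slot rate function when attacking the full problem \eqref{eq.sim_pro_reform}.
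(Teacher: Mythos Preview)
Your argument is correct. You use the primal ``convex-combination of optimizers'' trick: since the constraints in \eqref{eq.sub_problem} are jointly affine in $(\boldsymbol{W}_n,\boldsymbol{p}_n,q_n)$ and the PSD cone is convex, the averaged optimizer $\boldsymbol{W}_n^{(\lambda)}$ is feasible at the averaged parameters, and concavity of $c(\cdot)$ finishes the job. The only small point to make explicit is that optimizers $\boldsymbol{W}_n^{(i)}$ actually exist; this follows because the per-BS caps $\trace(\boldsymbol{W}_n\boldsymbol{A}_l)\le p_{l,n}$ together with $\boldsymbol{W}_n\succeq 0$ bound all entries of $\boldsymbol{W}_n$, so the feasible set is compact.

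The paper takes a different, dual route. It invokes strong duality to write $f_R(\boldsymbol{p}_n,q_n)=\min_{\lambda,\mu\ge 0}\max_{\boldsymbol{W}_n\succeq 0}L(\boldsymbol{W}_n,\lambda,\mu;\boldsymbol{p}_n,q_n)$, then fixes the optimal dual pair $(\{\lambda_{l,n}^*\},\mu^*)$ associated with the convex-combined parameters and shows that the Lagrangian evaluated there upper bounds $f_R$ at each endpoint; combining the two bounds yields the concavity inequality. In essence this is the observation that $f_R$ is a pointwise minimum over $(\lambda,\mu)$ of functions affine in $(\boldsymbol{p}_n,q_n)$. Your primal argument is more elementary in that it avoids any appeal to zero duality gap. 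The paper's dual argument, on the other hand, dovetails with the Lagrangian machinery used immediately afterward to derive the closed-form rank-one solution in Theorem~\ref{th.1}, and it also lets the paper comment on \emph{strict} concavity on the region $q_n<f_E(\boldsymbol{p}_n)$, which your feasibility-plus-Jensen argument does not by itself give. For the downstream use in Lemma~\ref{lm.equal_power_assignment}, however, only the non-strict inequality is needed, so your proof is entirely sufficient there.
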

\begin{proof}
Please refer to Appendix \ref{app.concavity_r}.
\end{proof}

Due to the concavity of power-rate capacity function, the direct water-filling (DWT) algorithm is well-known to maximize short-term system throughput~\cite{yang2012optimal,ozel2011transmission} for energy harvesting enabled point-to-point transmissions.
In the DWT algorithm, the optimal power allocation for each time slot will be non-decreasing and remain constant until the energy in the battery is exhausted, which motivates us to simplify problem \eqref{eq.sim_pro_reform}.

Similar to the DWT algorithm, we first process the energy profile $\{E_{l,n}\}_{n\in\mathcal{N}}$ of each individual BS $l\in \mathcal{L}$ and find its $k$th power changing slots as
\begin{align}
\vspace{-1em}
n_{l,k} = \arg \min_{n_{l,k} \leq N, n_{l,k} > n_{l,k-1}} \left\{\frac{\sum_{t = n_{l,k-1}+1}^{n_{l,k}} E_{l,t}}{n_{l,k} - n_{l,k-1}}\right\}, \label{eq.energy_changing_slots}
\end{align}
where $k \in \{1,\ldots,K_l\}$ with $K_l$ being the number of power changing slots, and $n_{l,0} = 0$, $n_{l,K_l} = N$.
Then, considering the independent random energy profiles for all the BSs, we combine all the energy changing slots obtained from \eqref{eq.energy_changing_slots} as an index set $\mathcal{N}_{M} = \bigcup_{l\in\mathcal{L}} \{n_{l,k_l}\}_{k_l=1}^{K_l}$, where duplicate changing slot indices will be all removed.
Finally, after sorting $\mathcal{N}_{M}$ in an ascending order by $n_1<n_2<\ldots<n_M$, all the $N$ time slots can be divided into $M = |\mathcal{N}_{M}|$ intervals, where each interval $m$ contains slots $\mathcal{I}_m = \{n_{m-1}+1,\ldots,n_{m}\}$ for any $m  \in \mathcal{M} =  \{1,\ldots,M\}$ and $n_0 = 0$. Besides, the length of the $m$th interval can be denoted by $I_m = n_{m}-n_{m-1}$.

Following the concavity of $f_R(\boldsymbol{p}_n,q_n)$, we can obtain some properties of the optimal solution to problem \eqref{eq.sim_pro_reform}.

\begin{lemma}\label{lm.equal_power_assignment}
In each interval $m \in \mathcal{M}$, the optimal solution to every time slot $n \in \mathcal{I}_m$ of problem \eqref{eq.sim_pro_reform} assigns the same power consumption and RF charging constraint for each BS and the energy receiver, i.e. $\left(\boldsymbol{p}^*_n,q^*_n\right) = \left(
\tilde{\boldsymbol{p}}_m,\tilde{q}_m\right)$.
\end{lemma}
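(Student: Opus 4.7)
The plan is to prove the lemma by an averaging/Jensen argument. Suppose $\{\boldsymbol{W}_n^*\}_{n\in\mathcal{N}}$ is any optimal solution to problem \eqref{eq.sim_pro_reform}, and for each slot $n$ let $p_{l,n}^* = \trace(\boldsymbol{W}_n^*\boldsymbol{A}_l)$ and $q_n^* = \eta\trace(\boldsymbol{W}_n^*\boldsymbol{G})$. Within a fixed interval $m\in\mathcal{M}$, define the averaged quantities $\tilde{\boldsymbol{p}}_m = \frac{1}{I_m}\sum_{n\in\mathcal{I}_m}\boldsymbol{p}_n^*$ and $\tilde{q}_m = \frac{1}{I_m}\sum_{n\in\mathcal{I}_m}q_n^*$, and replace $(\boldsymbol{p}_n^*,q_n^*)$ by $(\tilde{\boldsymbol{p}}_m,\tilde{q}_m)$ for every $n\in\mathcal{I}_m$. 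I will show this averaged allocation is still feasible and has objective value no smaller than the original, which forces the original optimum to be achievable (and hence attained) by an interval-constant allocation.

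First I would verify feasibility of the averaged allocation. The RF-energy constraint \eqref{eq.sim_pro_reform_q} is a single sum over all $n$, and averaging within each interval preserves that sum, so \eqref{eq.sim_pro_reform_q} is unaffected. The delicate part is the cumulative constraint \eqref{eq.r_simp_linear_cons}. Here I would use the construction of intervals in \eqref{eq.energy_changing_slots}: for each BS $l$, the refined partition $\{\mathcal{I}_m\}_{m\in\mathcal{M}}$ is a refinement of the per-BS partition $\{(n_{l,k-1},n_{l,k}]\}_{k=1}^{K_l}$, and by the minimality defining $n_{l,k}$, the average energy arrival rate $\bar{E}_{l,k}=\frac{1}{n_{l,k}-n_{l,k-1}}\sum_{t=n_{l,k-1}+1}^{n_{l,k}} E_{l,t}$ is monotonically non-decreasing in $k$. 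This is the same structural property that makes the DWT power profile non-decreasing; it implies that if one allocates any power profile that is constant on each $\mathcal{I}_m$ and whose interval averages respect the energy arrivals at the per-BS change points, then the per-slot cumulative constraints \eqref{eq.r_simp_linear_cons} hold automatically at every intermediate slot $n$, not just at the $n_{l,k}$'s. I would spell this out by checking the cumulative inequality at an arbitrary $n\in\mathcal{I}_m$ and bounding it by the cumulative inequality at $n_m$, which holds because the original $\{\boldsymbol{W}_n^*\}$ is feasible.

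Next I would invoke concavity to show the objective does not decrease. Because each slot's rate equals $f_R(\boldsymbol{p}_n,q_n)$ for some covariance matrix $\boldsymbol{W}_n$ achieving it, the sum throughput is upper bounded by $\sum_{n\in\mathcal{I}_m} f_R(\boldsymbol{p}_n^*,q_n^*)$. By Lemma~\ref{lm.concavity_r}, $f_R$ is concave in $(\boldsymbol{p},q)$, and Jensen's inequality yields
\begin{equation}
\frac{1}{I_m}\sum_{n\in\mathcal{I}_m} f_R(\boldsymbol{p}_n^*,q_n^*) \;\leq\; f_R(\tilde{\boldsymbol{p}}_m,\tilde{q}_m). \nonumber
\end{equation}
Summing over $m\in\mathcal{M}$ and multiplying by $I_m$ shows that the constant-per-interval allocation $(\tilde{\boldsymbol{p}}_m,\tilde{q}_m)$ attains at least the same throughput as $(\boldsymbol{p}_n^*,q_n^*)$. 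Combined with feasibility from the previous paragraph, this proves that $\left(\tilde{\boldsymbol{p}}_m,\tilde{q}_m\right)_{m\in\mathcal{M}}$ is optimal, establishing the lemma.

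The main obstacle, in my view, is the feasibility verification for the cumulative energy constraints \eqref{eq.r_simp_linear_cons}: the averaging mixes power across slots, so we must argue that the redistributed profile never over-consumes at any intermediate slot. This is where the carefully chosen definition of the power-changing slots in \eqref{eq.energy_changing_slots} is essential and must be used explicitly; the Jensen step itself is then immediate from Lemma~\ref{lm.concavity_r}.
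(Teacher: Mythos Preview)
Your proposal is correct and follows essentially the same approach as the paper: average $(\boldsymbol{p}_n^*,q_n^*)$ over each interval $\mathcal{I}_m$, invoke the concavity of $f_R$ from Lemma~\ref{lm.concavity_r} (Jensen), and check feasibility of the averaged allocation using the definition of the power-changing slots~\eqref{eq.energy_changing_slots}. The paper phrases it as a contradiction (assume different sub-interval allocations, average, obtain a weakly better feasible point) but the mechanics are identical; if anything, you are more explicit than the paper about why the cumulative constraints~\eqref{eq.r_simp_linear_cons} continue to hold at intermediate slots after averaging, which the paper dispatches in one sentence.
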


\begin{proof}
For each interval $m$, suppose there are $S_m$ sub-intervals with different optimal  $\left(\tilde{\boldsymbol{p}}^{1}_m,\tilde{q}^{1}_m\right),\ldots,\left(
\tilde{\boldsymbol{p}}^{S_m}_m,\tilde{q}^{S_m}_m\right)$ of durations $I_m^{1},\ldots,I_{m}^{S_m}$.
Note that the energy harvesting constraints in \eqref{eq.r_simp_linear_cons} are still satisfied and the interval length $I_m = \sum_{s=1}^{S_m} I_{m}^s$.

Let $\left(\tilde{\boldsymbol{p}}_m,\tilde{q}_m\right) = \left(\sum_{s=1}^{S_m}\tilde{\boldsymbol{p}}_m^sI_{m}^s , \sum_{s=1}^{S_m}\tilde{{q}}_m^sI_{m}^s  \right)/{I_{m}}$.
Due to the concavity of $f_R(\boldsymbol{p}_n,q_n)$, we can derive from Lemma \ref{lm.concavity_r},
\begin{align}
\vspace{-1em}
f_R\left(\tilde{\boldsymbol{p}}_m,\tilde{q}_m\right)\cdot{I_m} \geq \sum_{s=1}^{S_m} f_R \left(
\tilde{\boldsymbol{p}}^{s}_m,\tilde{q}^{s}_m\right) \cdot I_{m}^s,
\end{align}
% \begin{align}
% &  f_R(\frac{I_m^{1}\tilde{\boldsymbol{p}}^{1}_m+\ldots+I_m^{1}\tilde{\boldsymbol{p}}^{M_m}_m}{I_m^{1} + \ldots + I_m^{M_m}}, \frac{I_m^{1}\tilde{{q}}^{1}_m+\ldots+I_m^{1}\tilde{{q}}^{M_m}_m}{I_m^{1} + \ldots + I_m^{M_m}}), \nonumber\\
% & \geq \frac{I_m^{1}f_R\left(\tilde{\boldsymbol{p}}^{1}_m,\tilde{q}^{1}_m\right) + \ldots + I_m^{M_m}, \nonumber
% \end{align}
% two different optimal power assignments $\left(\boldsymbol{p}^\prime_m,q^\prime_m\right)$ and $\left(\boldsymbol{p}^{\prime\prime}_m,q^{\prime\prime}_m\right)$ for two different subintervals $\mathcal{I}^{\prime}_m \subset \mathcal{I}_m$ and $\mathcal{I}^{\prime\prime}_m \subset \mathcal{I}_m$, respectively.
% For each BS $l \in \mathcal{L}$ with different power consumptions, i.e. $p_{l,m}^{\prime} \neq p_{l,m}^{\prime\prime}$, the average power consumption $\bar{p}_{l,m} = \frac{I^\prime_{m}p_{l,m}^{\prime} + I^{\prime\prime}_{m}p_{l,m}^{\prime\prime}}{I^\prime_{m} + I^{\prime\prime}_m}$.
which indicates another optimal strategy with equal $\left(\tilde{\boldsymbol{p}}_m,\tilde{q}_m\right)$ exists for each slot $n \in \mathcal{I}_m$ and achieves a higher throughput without violating the RF charging constraint \eqref{eq.sim_pro_reform_q}.
In addition, since there is no energy changing slot in interval $\mathcal{I}_m$, the optimal  $\left(\tilde{\boldsymbol{p}}_m,\tilde{q}_m\right)$ also satisfies the energy harvesting constraints by \eqref{eq.energy_changing_slots}. Thus, it contradicts with the assumption of different optimal power assignments, which completes the proof.
\end{proof}

From Lemma \ref{lm.equal_power_assignment}, it can be seen that for each slot $n \in \mathcal{I}_m$, the optimal joint beamforming vector $\boldsymbol{w}_n^* = \tilde{\boldsymbol{w}}_m$ and the optimal transmission rate of the data receiver $r_n^* = \tilde{r}_m = f_R\left(\tilde{\boldsymbol{p}}_m,\tilde{q}_m\right)$.
Thus, problem \eqref{eq.sim_pro_reform} can be simplified by optimizing for each interval $m \in \mathcal{M}$ instead of optimizing for each slot $n\in \mathcal{N}$,
\begin{align}
\vspace{-1em}
\hspace{-0.6em}\max_{\{\widetilde{\boldsymbol{W}}_{m} \succeq 0\}_{m\in\mathcal{M}}} & \sum_{m=1}^M I_{m}c(\widetilde{\boldsymbol{W}}_{m})  \label{eq.sim_pro_reform_int} \\
\mathrm{s.t.} \hspace{2.5em}
& \hspace{-1.5em} \trace{\left(\sum_{m=1}^M I_{m} \widetilde{\boldsymbol{W}}_m\textbf{G}\right)} \geq q/\eta, \nonumber\\
& \hspace{-1.5em} \trace{\left(\sum_{t =1}^m I_{t} \widetilde{\boldsymbol{W}}_{t} \boldsymbol{A}_{l}\right)} \leq \sum_{t^\prime =1}^{n_m} E_{l,t^\prime}, \hspace{0em} \forall l \in \mathcal{L},\forall m \in \mathcal{M}, \hspace{-0.2em} \nonumber
\end{align}
where $\widetilde{\boldsymbol{W}}_m = \tilde{\boldsymbol{w}}_m \tilde{\boldsymbol{w}}_m^H$ is the transmit beamforming covariance matrix for the interval $m$.
In order to derive a semi closed-form expression of the optimal solution to problem \eqref{eq.sim_pro_reform_int}, the Lagrangian function of problem \eqref{eq.sim_pro_reform_int} can be written by,
\begin{align}
&L(\{\widetilde{\boldsymbol{W}}_{m}\},\{\{\lambda_{l,m}\}\},\mu) \nonumber \\
% & =  \sum_{m}^M I_m c(\widetilde{\boldsymbol{W}}_{m}) + \mu \left(\trace{\left(\sum_{m=1}^M I_{m} \widetilde{\boldsymbol{W}}_m\textbf{G}\right)}  - q/\eta\right) \nonumber \\
% & \hspace{1em}- \sum_{l=1}^L \sum_{m=1}^M \lambda_{l,m} \left(\trace{\left(\sum_{t =1}^m I_{t} \widetilde{\boldsymbol{W}}_{t} \boldsymbol{A}_{l}\right)} - \sum_{t^\prime =1}^{n_m} E_{l,t^\prime}\right), \nonumber  \\
& = \sum_{m=1}^M I_m L_m + \sum_{l=1}^L \sum_{m=1}^M  \lambda_{l,m} \sum_{t^\prime =1}^{n_m} E_{l,t^\prime} - \mu q/\eta, \label{eq.original_langrangian}
\end{align}
where for the $m$th interval, we have
\begin{align}
L_m =  c(\widetilde{\boldsymbol{W}}_{m}) +\mu \trace{\left(\widetilde{\boldsymbol{W}}_m\textbf{G}\right)}-\sum_{l=1}^L \sum_{t=m}^M \lambda_{l,t} \trace{\left( \widetilde{\boldsymbol{W}}_{m} \boldsymbol{A}_{l}\right)}.  \nonumber
\end{align}
Then, from \eqref{eq.original_langrangian}, the dual function $g(\{\{\lambda_{l,m}\}\},\mu)$ can be defined as the maximum of the following optimization problem
\begin{align}
g(\{\{\lambda_{l,m}\}\},\mu) =\max_{\{\widetilde{\boldsymbol{W}}_{m} \succeq 0\}_{m\in\mathcal{M}}} &  L(\{\widetilde{\boldsymbol{W}}_{m}\},\{\{\lambda_{l,m}\}\},\mu). \nonumber
\end{align}
Since the original problem \eqref{eq.sim_pro_reform_int} is convex, the dual function reaches a minimum at the optimal value of the primal problem, i.e. the duality gap is zero.
Thus, the optimal value of problem \eqref{eq.sim_pro_reform_int} is equivalent to
$\min_{\{\{\lambda_{l,m} \geq 0\}\}, \mu \geq 0} g(\{\{\lambda_{l,m}\}\},\mu)$.
In the sequel, we will show how to compute the dual function and solve the minimization problem efficiently.

From \eqref{eq.original_langrangian}, the maximization of $L(\{\widetilde{\boldsymbol{W}}_{m}\},\{\{\lambda_{l,m}\}\},\mu)$ can be decomposed into sub-problems as follows,
\begin{align}
\max_{\widetilde{\boldsymbol{W}}_{m} \succeq 0}  \log \left( 1 + { \boldsymbol{h}^H \widetilde{\boldsymbol{W}}_{m}\boldsymbol{h}}/{\sigma^2} \right)
 - \trace{(\boldsymbol{B}_m \widetilde{\boldsymbol{W}}_{m})}, \label{eq.dual_decomp_sub}
\end{align}
where we define $\boldsymbol{B}_m = \diag\left(\sum_{t=m}^M \lambda_{1,t}, \ldots, \sum_{t=m}^M \lambda_{L,t}\right) -  \mu \textbf{G}$. It requires $\boldsymbol{B}_m$ to be positive definite, i.e. $\boldsymbol{B}_m \succ 0$. Otherwise the optimal value of problem \eqref{eq.dual_decomp_sub} will be unbounded above.
If the optimal dual solution to problem \eqref{eq.sim_pro_reform_int} are denoted as $\{\{\lambda^*_{l,m}\}_{l\in\mathcal{L}}\}_{m\in\mathcal{M}}$ and $\mu^*$, we have the following theorem.

\begin{theorem}\label{th.1}
The optimal solution to problem \eqref{eq.sim_pro_reform_int} is rank one for each interval $m\in\mathcal{M}$ and can be derived in the form of
\begin{align}
\widetilde{\boldsymbol{W}}_{m}^* = \boldsymbol{B}_m^{-\frac{1}{2}}\boldsymbol{v}_m \left(1 - \sigma^2/\tilde{h}_{m}\right)^+\boldsymbol{v}_m^H \boldsymbol{B}_m^{-\frac{1}{2}}, \nonumber
\end{align}
where we define $\boldsymbol{B}_m = \diag\left(\sum_{t=m}^M \lambda^*_{1,t}, \ldots, \sum_{t=m}^M \lambda^*_{L,t}\right) -  \mu^* \textbf{G}$, $\tilde{h}_{m} = \|\boldsymbol{h}^{H}\boldsymbol{B}_m^{-1/2}\|^2$ and $\boldsymbol{v}_m = \frac{\boldsymbol{h}^{H}\boldsymbol{B}_m^{-1/2}}{\|\boldsymbol{h}^{H}\boldsymbol{B}_m^{-1/2}\|}$.
\end{theorem}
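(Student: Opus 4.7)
The plan is to build on the Lagrangian decomposition that the paper has already set up and derive $\widetilde{\boldsymbol{W}}_{m}^{*}$ from the per-interval sub-problem \eqref{eq.dual_decomp_sub}. Since problem \eqref{eq.sim_pro_reform_int} is convex and Slater's condition clearly holds (pick any strictly feasible PSD matrices with arbitrarily small energy usage), strong duality gives zero duality gap, so the optimal primal $\widetilde{\boldsymbol{W}}_{m}^{*}$ is a maximizer of the Lagrangian at the optimal dual multipliers $\{\{\lambda^{*}_{l,m}\}\},\mu^{*}$. Consequently it suffices to show that every maximizer of \eqref{eq.dual_decomp_sub} has rank one and coincides with the stated expression.

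First I would argue, as the paper already hints, that $\boldsymbol{B}_m\succ 0$ at the optimum: if $\boldsymbol{B}_m$ had a non-positive eigenvalue with eigenvector $\boldsymbol{u}$, then scaling $\widetilde{\boldsymbol{W}}_m=t\boldsymbol{u}\boldsymbol{u}^H$ with $t\to\infty$ would drive the objective to $+\infty$, contradicting the finite optimal value implied by strong duality. Having $\boldsymbol{B}_m\succ 0$ lets me introduce the change of variable $\widetilde{\boldsymbol{V}}_m=\boldsymbol{B}_m^{1/2}\widetilde{\boldsymbol{W}}_m\boldsymbol{B}_m^{1/2}$, which preserves the PSD cone and transforms \eqref{eq.dual_decomp_sub} into
\begin{equation}
\max_{\widetilde{\boldsymbol{V}}_m\succeq 0}\ \log\!\bigl(1+\tilde{\boldsymbol{h}}_m^{H}\widetilde{\boldsymbol{V}}_m\tilde{\boldsymbol{h}}_m/\sigma^2\bigr)\,-\,\trace(\widetilde{\boldsymbol{V}}_m), \nonumber
\end{equation}
where $\tilde{\boldsymbol{h}}_m=\boldsymbol{B}_m^{-1/2}\boldsymbol{h}$ and $\|\tilde{\boldsymbol{h}}_m\|^2=\tilde h_m$. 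This eliminates $\boldsymbol{B}_m$ from the objective and reduces the problem to a clean MISO form.

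Next I would establish rank one by a two-step maximization. For any $\widetilde{\boldsymbol{V}}_m\succeq 0$ with trace $t$, the Rayleigh quotient bound gives $\tilde{\boldsymbol{h}}_m^{H}\widetilde{\boldsymbol{V}}_m\tilde{\boldsymbol{h}}_m\leq\lambda_{\max}(\widetilde{\boldsymbol{V}}_m)\|\tilde{\boldsymbol{h}}_m\|^2\leq t\,\tilde h_m$, with equality if and only if $\widetilde{\boldsymbol{V}}_m=t\,\tilde{\boldsymbol{h}}_m\tilde{\boldsymbol{h}}_m^{H}/\|\tilde{\boldsymbol{h}}_m\|^2$, which is rank one. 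Because $\log(1+\cdot)$ is strictly increasing, any optimum must attain this bound, giving the rank-one structure. Substituting, the objective collapses to the scalar concave function $\log(1+t\,\tilde h_m/\sigma^2)-t$, whose unique maximizer over $t\geq 0$ is $t^{*}=(1-\sigma^2/\tilde h_m)^{+}$ by standard water-filling (with $t^{*}=0$ recovering the trivial rank-zero case when $\tilde h_m\leq\sigma^2$, which is still of rank at most one as claimed).

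Finally, mapping back through $\widetilde{\boldsymbol{W}}_m=\boldsymbol{B}_m^{-1/2}\widetilde{\boldsymbol{V}}_m\boldsymbol{B}_m^{-1/2}$ with $\widetilde{\boldsymbol{V}}_m=t^{*}\boldsymbol{v}_m\boldsymbol{v}_m^{H}$ and $\boldsymbol{v}_m=\tilde{\boldsymbol{h}}_m/\|\tilde{\boldsymbol{h}}_m\|$ (with the adjoint convention matching the paper's $\boldsymbol{v}_m=\boldsymbol{h}^{H}\boldsymbol{B}_m^{-1/2}/\|\boldsymbol{h}^{H}\boldsymbol{B}_m^{-1/2}\|$) yields exactly the stated closed form. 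The main obstacle I anticipate is the rank-one argument itself; the Rayleigh-quotient route above is the cleanest, but a KKT-based alternative is available: writing the stationarity condition for \eqref{eq.dual_decomp_sub} as $\boldsymbol{B}_m-\alpha\boldsymbol{h}\boldsymbol{h}^{H}=\boldsymbol{Z}$ with $\boldsymbol{Z}\succeq 0$ and $\boldsymbol{Z}\widetilde{\boldsymbol{W}}_m=0$ forces the range of $\widetilde{\boldsymbol{W}}_m$ into the null space of $\boldsymbol{Z}$, which is one-dimensional because the rank-one perturbation $\alpha\boldsymbol{h}\boldsymbol{h}^{H}$ of the positive definite $\boldsymbol{B}_m$ can have at most one zero eigenvalue. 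Either route suffices, but the Rayleigh-quotient path avoids manipulating the complementary slackness matrix and is what I would present.
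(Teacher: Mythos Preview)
Your proposal is correct and follows essentially the same route as the paper: both proofs use the change of variable $\boldsymbol{B}_m^{1/2}\widetilde{\boldsymbol{W}}_m\boldsymbol{B}_m^{1/2}$ to reduce sub-problem \eqref{eq.dual_decomp_sub} to a clean MISO form and then apply water-filling on the effective channel $\boldsymbol{h}^{H}\boldsymbol{B}_m^{-1/2}$. The only difference is cosmetic---where the paper cites the standard MIMO capacity/water-filling result from Cover--Thomas via the SVD, you supply the explicit Rayleigh-quotient argument for the rank-one structure; your additional justifications (strong duality, $\boldsymbol{B}_m\succ 0$) are stated in the paper just before the theorem rather than inside the proof.
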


\begin{proof}
Firstly, we introduce $\widehat{\boldsymbol{W}}_{m} = \boldsymbol{B}_m^{1/2}\widetilde{\boldsymbol{W}}_{m}\boldsymbol{B}_m^{1/2}$ as auxiliary matrices and reformulate sub-problem \eqref{eq.dual_decomp_sub} by
\begin{align}
\max_{\widehat{\boldsymbol{W}}_{m} \succeq 0}  \log \left( 1 + \frac{ \boldsymbol{h}\boldsymbol{B}_m^{-1/2}\widehat{\boldsymbol{W}}_{m}  \boldsymbol{B}_m^{-1/2}\boldsymbol{h}^H}{\sigma^2} \right)
 - \trace{(\widehat{\boldsymbol{W}}_{m} )}, \nonumber
\end{align}
which is equivalent to the problem of MIMO channel capacity maximization with sum-power constraint and can be solved by the standard water-filling algorithm for an equivalent Gaussian vector channel $\boldsymbol{h}^{H}\boldsymbol{B}_m^{-1/2}$~\cite{cover2012elements}.
According to the singular value decomposition (SVD) of the equivalent channel vector $\boldsymbol{h}^{H}\boldsymbol{B}_m^{-1/2} = 1\times \sqrt{\tilde{h}_{m}} \times \boldsymbol{v}_m^H$ with $\tilde{h}_{m} = \|\boldsymbol{h}^{H}\boldsymbol{B}_m^{-1/2}\|^2$ and $\boldsymbol{v}_m = \frac{\boldsymbol{h}^{H}\boldsymbol{B}_m^{-1/2}}{\|\boldsymbol{h}^{H}\boldsymbol{B}_m^{-1/2}\|}$, we can obtain the optimal solution as \begin{align}
\widehat{\boldsymbol{W}}_{m}^* = \boldsymbol{v}_m \left(1 - \sigma^2/\tilde{h}_{m}\right)^+\boldsymbol{v}_m^H,  \nonumber
\end{align}
from which we obtain the optimal solution $\widetilde{\boldsymbol{W}}_{m}^*$.
\end{proof}

As a result, for any time slot $n \in \mathcal{I}_m$, the optimal variance matrix ${\boldsymbol{W}}^*_{n}$ is rank one, which indicates that the maximum of problem \eqref{eq.sim_pro_reform_int} can be achieved by problem \eqref{eq.gen_problem_sim}. Thus, the optimal beamforming vectors to problem \eqref{eq.gen_problem_sim} can be given by
\begin{align}
{\boldsymbol{w}}^*_n = \tilde{\boldsymbol{w}}^*_m = \sqrt{\left(1 - \sigma^2/\tilde{h}_{m}\right)^+} \boldsymbol{B}_m^{-\frac{1}{2}}\boldsymbol{v}_m,
\end{align}
and the maximum achievable rate of the data receiver is
\begin{align}
r_n^* = \tilde{r}^*_m & = \log\left(1+\tilde{h}_{m}\left(1 - \sigma^2/\tilde{h}_{m}\right)^+/\sigma^2\right), \\
& = \left(2 \log\left(\|\boldsymbol{h}^{H}\boldsymbol{B}_m^{-1/2}\|/\sigma\right)\right)^+. \nonumber
\end{align}

\vspace{-1em}
\subsection{Online Algorithm}
In this part, we consider only casual information of the random energy arrivals is available at each BS. That is, joint beamforming vector ${\boldsymbol{w}}_n$ for each slot $n \in \mathcal{N}$ will be decided only by the energy arrivals of current and past slots, i.e. $\{E_{l,t}\}_{t\leq n}$, and the average energy harvesting rate $P_{H,l}$.
Motivated by Lemma \ref{lm.opt_energy_not_capacity}, to guarantee RF charged energy for the energy receiver, the online algorithm will assign power consumptions for all BSs proportional to their average energy harvesting rates, i.e. $\boldsymbol{p}_n = k_n\cdot(P_{H,1},\ldots,P_{H,L})$, where factor $k_n = \min_{l\in\mathcal{L}}\{\frac{b_{l,n}}{P_{H,l}}\}$ with $b_{l,n} = E_{l,n} + \sum_{t=1}^{n-1} \left(E_{l,t} - p_{l,t}\right)$ being the residual energy of BS $l \in \mathcal{L}$.
Then, the RF charging constraints for the energy receiver will be
\begin{align} \nonumber
q_{n} =
\begin{cases}
        \min\{\frac{k_n}{N}q,\hspace{0.3em}f_E(\boldsymbol{p}_n)\}, & \hbox{$n < N$;} \\
        q - \sum_{t=1}^{N-1} q_t, & \hbox{$n = N$;}
\end{cases}
\end{align}
where $\frac{k_n}{N}q$ is the expected RF charged energy when transmitting with factor $k_n$, and $f_E(\boldsymbol{p}_n) = \big(\sum_{l=1}^L|g_{l}|\sqrt{p_{l,n}}\big)^2$ is the maximum RF charged energy when transmitting with $\boldsymbol{p}_n$.
Therefore, for each time slot $n \in \mathcal{N}$, the optimal beamforming vector of the online algorithm can be obtained by solving problem \eqref{eq.sub_problem} with given $\boldsymbol{p}_n$ and $q_n$, and thus the maximum transmission rate will be $r_n = f_R(\boldsymbol{p}_n,q_n)$.

\section{Numerical Evaluations}

In this section, we will validate our results with numerical examples, where the proposed joint beamforming strategy will be compared with some other baseline schemes.
We consider a Cloud-RAN system with three BSs, one data receiver and one energy receiver, where the distances between every two BSs are $50$ meters and the distances from the data receiver and the energy receiver to the three BSs are $(29,29,29)$ meters and $(10,40,45.8)$ meters, respectively.
The average channel power gain is modeled as $10^{-3}a/{d^{\alpha}}$, where $d$ is the distance, $\alpha$ is the pathloss exponent set as $\alpha = 2.5$ and $a \sim \exp(1)$ is the Rayleigh fading.
Assume system bandwidth is $1$ MHz and the additive white Gaussian noise at the data receiver has a power spectral density $N_0 = 10^{-15}$ W/Hz.
We consider the number of time slots $N = 60$ with slot length $1$s.
For each BS $l\in\mathcal{L}$, the random energy arrivals follow a Poisson distribution with mean $p_{H,l} = 0.1 \mathrm{W}$.
For the energy receiver, the energy conversion efficiency factor is $\eta = 80\%$.

For the purpose of exposition, average throughput of the data receiver and average RF charging rate of the energy receiver are defined as $\bar{r} = \frac{T}{N}$ and $\bar{q} = \frac{Q}{N}$, respectively.
In addition, the channel correlation factor between the two receivers is defined as $\rho = \frac{|\boldsymbol{g}^{H}\boldsymbol{h}|}{\|\boldsymbol{g}\|\|\boldsymbol{h}\|}$.
In Fig. \ref{fig.region}, energy-throughput regions are shown for correlation factors of $\rho = 0.10$, $0.31$ and $0.51$.
It can be seen that the average throughput $\bar{r}$ will decrease as the average RF charging rate $\bar{q}$ increases for a fixed $\rho$.
Moreover, the achievable energy-throughput region will expand as the channel correlation increases, which reveals that the energy receiver can benefit more from highly correlated channels.

\begin{figure}
\centering
\includegraphics[height = 6cm]{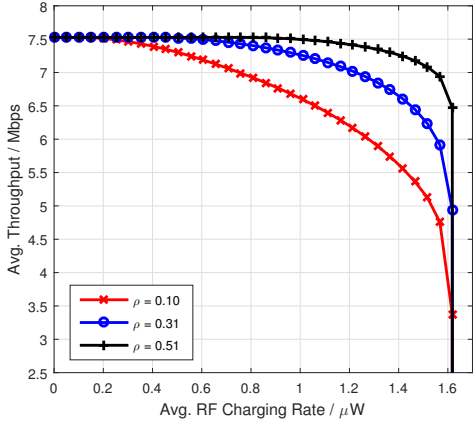}
\caption{Energy-throughput regions for different channel correlation factors.}
\label{fig.region}
\vspace{-1em}
\end{figure}
\begin{figure}
\centering
\includegraphics[height = 6cm]{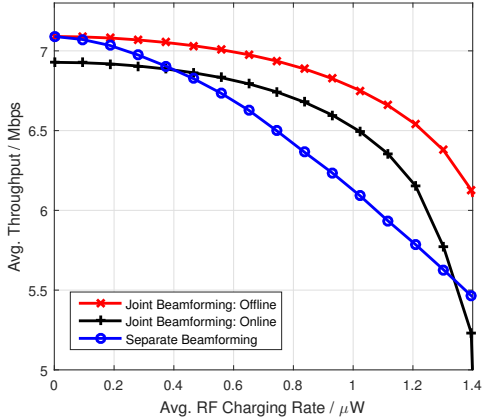}
\vspace{-1em}
\caption{Comparison of the average throughput versus average RF charging rate for different beamforming schemes.}
\label{fig.average_q}
\vspace{-1em}
\end{figure}

Now we evaluate the average performance of the proposed offline and online joint beamforming scheme over $100$ times random channel realizations.
A baseline scheme named separate beamforming is introduced, where energy beamforming is optimized for the first $N_E$ slots to guarantee the RF charging constraint, and then beamforming for data transmission is considered for the last $N-N_E$ slots.
% Note that different channel generations will result in different channel correlation factors.
In Fig. \ref{fig.average_q}, the average throughput $\bar{r}$ of all the three schemes decrease as the RF charging rate $\bar{q}$ increases. Moreover, the offline scheme outperforms the separate beamforming scheme and the performance gap will first increase and then stay roughly the same.
Meanwhile, the performance gap between the online and offline schemes is almost the same except for large RF charging constraint, which is due to the fact that the $q_{\max}$ of the offline scheme is larger than that of the online scheme.

\begin{figure}
\centering
\includegraphics[height = 7cm]{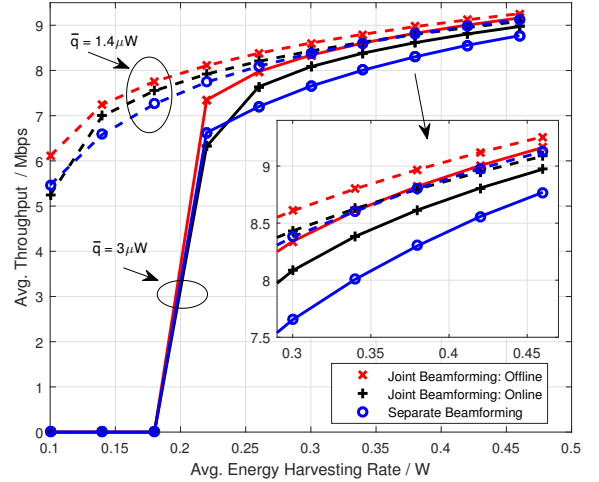}
\caption{Comparison of the average throughput versus average energy harvesting rate for different beamforming schemes.}
\label{fig.average_ph}
\vspace{-1em}
\end{figure}

In Fig. \ref{fig.average_ph}, the average throughput $\bar{r}$ versus average energy harvesting rate $P_{H,l}$ for different beamforming schemes are compared.
Here, $P_{H,l}$ are equal for all BSs.
On one hand, for a lower RF charging rate $\bar{q} = 1.4\mu$W, it can been seen that the gap between the offline scheme and the separate beamforming scheme decreases as $P_{H,l}$ increases, while the performance of the online and the separate beamforming schemes will be nearly the same for large $P_{H,l}$. On the other hand, for a larger RF charging rate $\bar{q} = 3\mu$W, both the offline and online schemes outperform the separate beamforming scheme, where the gaps become larger and decrease very slow as $P_{H,l}$ grows.

\section{Conclusion}

In this paper, joint energy and data beamforming design for energy-throughout tradeoff between one data receiver and one energy receiver has been investigated in a sustainable Cloud-RAN system.
An optimization problem has been formulated to maximize throughput of the data receiver while guaranteeing RF charged energy of the energy receiver. 
Both offline and online joint beamforming designs have been proposed and compared with a low-complexity baseline beamforming design by numerical simulations, which demonstrate the superiority of the proposed joint beamforming strategy.
Such joint beamforming strategy can be extended to support multiple data and energy receivers or massive MIMO BSs in our future work.

\appendix

\subsection{Proof of Lemma \ref{lm.opt_energy_not_capacity}}\label{app.opt_energy_not_capacity}
\begin{proof}
For $\forall i,j \in \mathcal{L}$, each entry $\textbf{W}_{ij}$ in the matrix $\textbf{W}$ satisfies,

\vspace{-1em}
\begin{small}
\begin{align}
\hspace{0.8em}& \hspace{-0.5em}|\textbf{W}_{ij}|^2 = \sum_{n=1}^N |{w}_{i,n}{w}_{j,n}^H|^2\!+\!\sum_{n_1=1}^N\!\sum_{\substack{n_2=1\\n_2 \neq n_1}}^N\!{w}_{i,n_1}{w}_{j,n_1}^H {w}_{i,n_2}^H{w}_{j,n_2},\nonumber \\
& \hspace{-1em}\leq \sum_{n=1}^N |{w}_{i,n}{w}_{j,n}^H|^2\!+\!\sum_{n_1=1}^N\!\sum_{\substack{n_2=1\\n_2 \neq n_1}}^N \hspace{-0.5em}\frac{|{w}_{i,n_1}{w}_{j,n_2}|^2 + |{w}_{i,n_2}^H{w}_{j,n_1}^H|^2}{2},\hspace{-0.5em}\nonumber\\
% & \hspace{-1em}= \sum_{n_1=1}^N \sum_{n_2=1}^N |{w}_{i,n_1}{w}_{j,n_2}|^2, \nonumber\\
& \hspace{-1em}= \sum_{n_1=1}^N |{w}_{i,n_1}|^2 \sum_{n_2=1}^N |{w}_{j,n_2}|^2, \nonumber
\end{align}
\vspace{-1em}
\end{small}

\hspace{-1em}where the equality holds if and only if for $\forall n_1,n_2 \in \mathcal{N}$,

\vspace{-1em}
\begin{small}
\begin{align}
{w}_{i,n_1}{w}_{j,n_2} =\left({w}_{i,n_2}^H{w}_{j,n_1}^H\right)^H \hspace{0.5em}\Leftrightarrow \hspace{0.5em}\frac{{w}_{i,n_1}}{{w}_{j,n_1}}=\frac{{w}_{i,n_2}}{{w}_{j,n_2}}. \nonumber
\end{align}
\vspace{-1em}
\end{small}

\hspace{-1em}It indicates that for each slot $n \in \mathcal{N}$, the beamforming weight $w_{l,n}$ of each BS $l \in \mathcal{L}$ is proportional to that of other BSs, i.e. $w_{l,n} = \sqrt{P_n} w_{l,0}e^{j\theta_n}$, and thus we have $\boldsymbol{w}_{n} =\sqrt{P_n}\boldsymbol{w}_{0}e^{j\theta_n}$, where $\theta_n \in [0,2\pi)$ is an arbitrary constant phase, and $P_n$ is the sum power of all BSs and $\boldsymbol{w}_{0} = (w_{1,0},\ldots,w_{L,0}) \in \mathbb{C}^{1\times L}$ is a unit vector that will be determined later.
Then, we can rewrite $\textbf{W}_{ij} = \sum_{n=1}^N P_n {w}_{i,0}{w}_{j,0}^H$ and obtain by \eqref{eq.q_entry} as follows,

\vspace{-1em}
\begin{small}
\begin{align}
Q & = \eta\sum_{i=1}^L \sum_{j=1}^L g_i^H g_j \sum_{n=1}^N P_n {w}_{i,0}{w}_{j,0}^H, \nonumber\\
& \leq \eta \sum_{n=1}^N P_n \sum_{i=1}^L \sum_{j=1}^L |g_i| |g_j||{w}_{i,0}||{w}_{j,0}|, \label{eq.abs_ineq} \\
& = \eta \sum_{n=1}^N P_n \bigg(\sum_{l=1}^L |g_l||w_{l,0}|\bigg)^2, \nonumber \\
& \leq \eta \sum_{l=1}^L \sum_{n=1}^N E_{l,n} \bigg(\sum_{l=1}^L |g_l||w_{l,0}|\bigg)^2, \label{eq.total_e_ineq}
\end{align}
\vspace{-1em}
\end{small}

\hspace{-1em}where the equality in \eqref{eq.abs_ineq} holds if and only if each complex number $g_i^H g_j{w}_{i,0}{w}_{j,0}^H$ has the same phase.
Thus, the phase of complex numbers $w_{l,0}$ and $g_l$ should be equal, i.e. $\frac{w_{l,0}}{|w_{l,0}|}\!=\!\frac{g_l}{|g_l|}$.
Besides, the equality $\sum_{n=1}^N P_n = \sum_{l=1}^L \sum_{n=1}^N E_{l,n}$ in \eqref{eq.total_e_ineq} holds if and only if for $\forall l \in \mathcal{L}$, $\sum_{t=1}^n P_t |w^*_{l,0}|^2 \leq \sum_{t=1}^n E_{l,t}$ for $\forall n \in \mathcal{N}$ and $|{w}^*_{l,0}| = \sqrt{\frac{\sum_{n=1}^N E_{l,n}}{\sum_{l=1}^L\sum_{n=1}^N E_{l,n}}}$, where the first condition is due to the energy constraints in \eqref{eq.renew_energy_cons_sim}, and the second condition can be proved by contradiction.
Suppose that any $w_{l,0}^\prime \neq w_{l,0}^*$, we have $\sum_{t=1}^N |w_{l,t}|^2 = \sum_{t=1}^N P_t |w_{l,0}|^2 \neq \sum_{t=1}^N E_{l,t}$, which contradicts with the equality in \eqref{eq.total_e_ineq}.
Thus, we obtain the optimal $\boldsymbol{w}^*_{0}$ and $q_{\max}$ in Lemma \ref{lm.opt_energy_not_capacity}.
\end{proof}

\subsection{Proof of Lemma \ref{lm.concavity_r}}\label{app.concavity_r}
\begin{proof}
Since problem \eqref{eq.sub_problem} is a convex optimization problem, it has a zero duality gap and thus can be solved by the Lagrange duality method

\vspace{-1em}
\begin{small}
\begin{align}
f_R(\boldsymbol{p}_n,q_n) & = \min_{\{\lambda_{l,n}\}_{l\in \mathcal{L}}, \mu} \max_{\boldsymbol{W}_{n} \succeq 0} c(\boldsymbol{W}_{n}) + \mu\left(\trace{(\boldsymbol{W}_n \textbf{G})} - q_n\right) \nonumber\\
& - \sum_{l = 1}^L \lambda_{l,n} \left(\trace{\left( \boldsymbol{W}_{n} \boldsymbol{A}_{l}\right)} - p_{l,n}\right),
\end{align}
\vspace{-1em}
\end{small}

\hspace{-1em}where $\{\lambda_{l,n}\}_{l\in \mathcal{L}}$ and $\mu$ are non-negative dual variables.
For a given $\left(\boldsymbol{p}_n,q_n\right)$, let $\boldsymbol{W}_{n}^*$, $\{\lambda^*_{l,n}\}_{l\in \mathcal{L}}$ and $\mu^*$ denote the optimal primal and dual variables of problem \eqref{eq.sub_problem}.
Suppose $\left(\boldsymbol{p}_n,q_n\right) = \alpha \left(\hat{\boldsymbol{p}}_n,\hat{q}_n\right) + (1-\alpha)\left(\check{\boldsymbol{p}}_n,\check{q}_n\right)$ with any $\alpha \in [0,1]$, we have

\begin{small}
\vspace{-1em}
\begin{align}
&\hspace{0.2em} f_R(\hat{\boldsymbol{p}}_n,\hat{q}_n) \nonumber \\
&\hspace{-0.2em} = c(\hat{\boldsymbol{W}}_{n}^*)\!+\!\hat{\mu}^*(\trace{(\hat{\boldsymbol{W}}_{n}^* \textbf{G})}\!-\!\hat{q}_n)\!-\!\sum_{l = 1}^L\!\hat{\lambda}^*_{l,n} (\trace{( \hat{\boldsymbol{W}}_{n}^* \boldsymbol{A}_{l})}\!-\!\hat{p}_{l,n}), \nonumber \\
&\hspace{-0.2em} \overset{(a)}{\leq}\!c(\hat{\boldsymbol{W}}_{n}^*)\!+\!{\mu}^*(\trace{(\hat{\boldsymbol{W}}_{n}^* \textbf{G})}\!-\!\hat{q}_n)\!-\!\sum_{l = 1}^L\!{\lambda}^*_{l,n} (\trace{( \hat{\boldsymbol{W}}_{n}^* \boldsymbol{A}_{l})}\!-\!\hat{p}_{l,n}), \nonumber \\
&\hspace{-0.2em} \overset{(b)}{\leq}\!c({\boldsymbol{W}}_{n}^*)\!+\!{\mu}^*(\trace{({\boldsymbol{W}}_{n}^* \textbf{G})}\!-\!\hat{q}_n)\!-\!\sum_{l = 1}^L\!{\lambda}^*_{l,n} (\trace{( {\boldsymbol{W}}_{n}^* \boldsymbol{A}_{l})}\!-\!\hat{p}_{l,n}), \nonumber
\end{align}
\vspace{-1em}
\end{small}

\hspace{-1em}where (a) holds because the dual variables $\{\hat{\lambda}^*_{l,n}\}_{l\in \mathcal{L}}$ and $\hat{\mu}^*$ minimize the dual function of $f_R(\hat{\boldsymbol{p}}_n,\hat{q}_n)$, and (b) holds since primal variable ${\boldsymbol{W}}_{n}^*$ maximizes the Lagrangian function of $f_R(\boldsymbol{p}_n,q_n)$ with given dual variables $\{\lambda^*_{l,n}\}_{l\in \mathcal{L}}$ and $\mu^*$.
Similarly, we can obtain the inequality for $f_R(\check{\boldsymbol{p}}_n,\check{q}_n)$. 

Therefore, with the obtained inequalities for $f_R(\hat{\boldsymbol{p}}_n,\hat{q}_n)$ and $f_R(\check{\boldsymbol{p}}_n,\check{q}_n)$, we have the following relationship

\begin{small}
\vspace{-1em}
\begin{align}
f_R(\boldsymbol{p}_n,q_n) \geq \alpha f_R(\hat{\boldsymbol{p}}_n,\hat{q}_n) + (1-\alpha) f_R(\check{\boldsymbol{p}}_n,\check{q}_n),
\end{align}
\vspace{-1em}
\end{small}

\hspace{-1em}where the inequality holds for $\alpha \in (0,1)$ and $f_R(\boldsymbol{p}_n,q_n) \neq 0$. Note that $f_R(\boldsymbol{p}_n,q_n) = 0$ if and only if $q_n \geq f_E(\boldsymbol{p}_n)$, i.e. the maximum RF charged energy of the energy receiver given power consumption $\boldsymbol{p}_n$.
Therefore, $f_R(\boldsymbol{p}_n,q_n)$ is a strictly concave function for $q_n \in [0, f_E(\boldsymbol{p}_n))$.
\end{proof}

\vspace{0.1em}
\section*{Acknowledgement}
This work is supported in part by the NSF under grants CNS-1053777, CNS-1320736, ECCS-1610874, and Career award ECCS-1554576. The work of Yu Cheng is also supported in part by National Natural Science Foundation of China under grant 61628107.

\vspace{0.1em}


\begin{thebibliography}{10}
\providecommand{\url}[1]{#1}
\csname url@samestyle\endcsname
\providecommand{\newblock}{\relax}
\providecommand{\bibinfo}[2]{#2}
\providecommand{\BIBentrySTDinterwordspacing}{\spaceskip=0pt\relax}
\providecommand{\BIBentryALTinterwordstretchfactor}{4}
\providecommand{\BIBentryALTinterwordspacing}{\spaceskip=\fontdimen2\font plus
\BIBentryALTinterwordstretchfactor\fontdimen3\font minus
  \fontdimen4\font\relax}
\providecommand{\BIBforeignlanguage}[2]{{%
\expandafter\ifx\csname l@#1\endcsname\relax
\typeout{** WARNING: IEEEtran.bst: No hyphenation pattern has been}%
\typeout{** loaded for the language `#1'. Using the pattern for}%
\typeout{** the default language instead.}%
\else
\language=\csname l@#1\endcsname
\fi
#2}}
\providecommand{\BIBdecl}{\relax}
\BIBdecl

\bibitem{chih2014toward}
I.~Chih-Lin, C.~Rowell, S.~Han, Z.~Xu, G.~Li, and Z.~Pan, ``Toward green and
  soft: a {5G} perspective,'' \emph{IEEE Commun. Mag.}, vol.~52, no.~2, pp.
  66--73, 2014.

\bibitem{dai2014sparse}
B.~Dai and W.~Yu, ``Sparse beamforming and user-centric clustering for downlink
  cloud radio access network,'' \emph{IEEE Access}, vol.~2, pp. 1326--1339,
  2014.

\bibitem{luo2015downlink}
S.~Luo, R.~Zhang, and T.~J. Lim, ``Downlink and uplink energy minimization
  through user association and beamforming in {C-RAN},'' \emph{IEEE Trans.
  Wireless Commun.}, vol.~14, no.~1, pp. 494--508, 2015.

\bibitem{shi2014group}
Y.~Shi, J.~Zhang, and K.~B. Letaief, ``Group sparse beamforming for green
  {Cloud-RAN},'' \emph{IEEE Trans. Wireless Commun.}, vol.~13, no.~5, pp.
  2809--2823, 2014.

\bibitem{piro2013hetnets}
G.~Piro, M.~Miozzo, G.~Forte, N.~Baldo, L.~A. Grieco, G.~Boggia, and P.~Dini,
  ``{HetNets} powered by renewable energy sources: sustainable next-generation
  cellular networks,'' \emph{IEEE Internet Comput.}, vol.~17, no.~1, pp.
  32--39, 2013.

\bibitem{ulukus2015energy}
S.~Ulukus, A.~Yener, E.~Erkip, O.~Simeone, M.~Zorzi, P.~Grover, and K.~Huang,
  ``Energy harvesting wireless communications: A review of recent advances,''
  \emph{IEEE J. Sel. Area Commun.}, vol.~33, no.~3, pp. 360--381, 2015.

\bibitem{vyas2013wehp}
R.~J. Vyas, B.~B. Cook, Y.~Kawahara, and M.~M. Tentzeris, ``{E-WEHP}: A
  batteryless embedded sensor-platform wirelessly powered from ambient
  digital-{TV} signals,'' \emph{IEEE Trans. Microw. Theory Techn.}, vol.~61,
  no.~6, pp. 2491--2505, 2013.

\bibitem{ermeey2016indoor}
A.~Ermeey, A.~P. Hu, M.~Biglari-Abhari, and K.~C. Aw, ``Indoor 2.45 {GHz}
  {Wi-Fi} energy harvester with bridgeless converter,'' \emph{IEEE J. Sel. Area
  Commun.}, vol.~34, no.~5, pp. 1536--1549, 2016.

\bibitem{ju2014throughput}
H.~Ju and R.~Zhang, ``Throughput maximization in wireless powered communication
  networks,'' \emph{IEEE Trans. Commun.}, vol.~13, no.~1, pp. 418--428, 2014.

\bibitem{lu2015wireless}
X.~Lu, P.~Wang, D.~Niyato, D.~I. Kim, and Z.~Han, ``Wireless charging
  technologies: Fundamentals, standards, and network applications,'' \emph{IEEE
  Commun. Surveys Tuts.}, vol.~18, no.~2, pp. 1413--1452, 2015.

\bibitem{xiang2012robust}
Z.~Xiang and M.~Tao, ``Robust beamforming for wireless information and power
  transmission,'' \emph{IEEE Wireless Commun. Lett.}, vol.~1, no.~4, pp.
  372--375, 2012.

\bibitem{xu2014multiuser}
J.~Xu, L.~Liu, and R.~Zhang, ``Multiuser {MISO} beamforming for simultaneous
  wireless information and power transfer,'' \emph{IEEE Trans. Signal
  Process.}, vol.~62, no.~18, pp. 4798--4810, 2014.

\bibitem{luo2015capacity}
S.~Luo, J.~Xu, T.~J. Lim, and R.~Zhang, ``Capacity region of {MISO} broadcast
  channel for simultaneous wireless information and power transfer,''
  \emph{IEEE Trans. Commun.}, vol.~63, no.~10, pp. 3856--3868, 2015.

\bibitem{boyd2004convex}
S.~Boyd and L.~Vandenberghe, \emph{Convex optimization}.\hskip 1em plus 0.5em
  minus 0.4em\relax Cambridge university press, 2004.

\bibitem{yang2012optimal}
J.~Yang and S.~Ulukus, ``Optimal packet scheduling in an energy harvesting
  communication system,'' \emph{IEEE Trans. Commun.}, vol.~60, no.~1, pp.
  220--230, 2012.

\bibitem{ozel2011transmission}
O.~Ozel, K.~Tutuncuoglu, J.~Yang, S.~Ulukus, and A.~Yener, ``Transmission with
  energy harvesting nodes in fading wireless channels: Optimal policies,''
  \emph{IEEE J. Sel. Area Commun.}, vol.~29, no.~8, pp. 1732--1743, 2011.

\bibitem{cover2012elements}
T.~M. Cover and J.~A. Thomas, \emph{Elements of information theory}.\hskip 1em
  plus 0.5em minus 0.4em\relax John Wiley \& Sons, 2012.

\end{thebibliography}
\end{document}